\newcommand{\newc}{\newcommand}
\newc{\beq}{\begin{equation}}
\newc{\eeq}{\end{equation}}
\newc{\kt}{\rangle}
\newc{\br}{\langle}
\newc{\beqa}{\begin{eqnarray}}
\newc{\eeqa}{\end{eqnarray}}
\newc{\pr}{\prime}
\newc{\longra}{\longrightarrow}
\newc{\ot}{\otimes}
\newc{\rarrow}{\rightarrow}
\newc{\h}{\hat}
\newc{\bom}{\boldmath}
\newc{\btd}{\bigtriangledown}
\newc{\al}{\alpha}
\newc{\be}{\beta}
\newc{\ld}{\lambda}
\newc{\sg}{\sigma}
\newc{\p}{\psi}
\newc{\eps}{\epsilon}
\newc{\om}{\omega}
\newc{\mb}{\mbox}
\newc{\tm}{\times}
\newc{\hu}{\hat{u}}
\newc{\hv}{\hat{v}}
\begin{document}

\title{Quantum Discord and Logarithmic Negativity in the Generalized $n$-qubit Werner State
}


\author{M. S. Ramkarthik         \and
        Devvrat Tiwari  \and
        Pranay Barkataki
}


\institute{M. S. Ramkarthik \at
              \email{msramkarthik@phy.vnit.ac.in}             \\
           \and
            Devvrat Tiwari   \at
           	  \email{devvrat6@gmail.com}\\
           \and
           Pranay Barkataki \at
              \email{pranaybarkataki@students.vnit.ac.in}\\
              \emph{Department of Physics, Visvesvaraya National Institute of Technology, Nagpur - 440010}
}

\date{Received: date / Accepted: date}

\maketitle

\begin{abstract}
Quantum Discord (QD) is a measure of the total quantum non-local correlations of a quantum system. The formalism of quantum discord has been applied to various two-qubit mixed states and it has been reported that there is a non-zero quantum discord even when the states are unentangled. To this end, we have calculated the Quantum Discord for higher than two qubit mixed state, that is, the generalized $n$-qubit Werner state with a bipartite split. We found that the QD saturates to a straight line with unit slope in the thermodynamic limit. Qualitative studies of entanglement between the two subsystems using logarithmic negativity revealed that the entanglement content between them increases non-uniformly with the number of qubits leading to its saturation. We have proved the above claims both analytically and numerically.
\keywords{Generalized $n$-qubit Werner state \and quantum discord \and logarithmic negativity.}
\PACS{03.67.Bg, 03.67.Mn}
\end{abstract}
\section{Introduction}
\label{intro}
Quantum non-local correlations has been one of the most unique and exclusive phenomena of the quantum world, which has no classical analogue. Over the years it was assumed that quantum entanglement \cite{horodecki} is the only source of quantum non-local correlations. However in the recent years, numerous evidences indicated in the direction that quantum entanglement is not the only non-local correlation that exist between two subsystems. Over the years it was shown that there are states that possess non-local behaviour but are not entangled \cite{knill},\cite{Bennett}. Zurek and Ollivier \cite{zurek}, Henderson and Vedral \cite{vedral} independently gave a measure for the quantification of such non-local correlations known as \textit{Quantum Discord}. For a bipartite composite quantum system $\rho_{AB}$, where $A$ and $B$ are the individual subsystems, the QD is defined as follows,
\begin{equation}
\mathcal{D}(A:B) = I(A:B) - J(A|B),
\end{equation}
where $I(A:B)$ and $J(A|B)$ are the total and classical correlations present between the two subsystems, respectively. Both of these terms are defined as follows,  $I(A:B) = S(A) + S(B) - S(A,B)$ and $J(A|B)=S(A) - S(A|B)$, where $S(A)$ and $ S(B)$ are the von Neumann entropies of the subsystem states $\rho_{A}$ and $\rho_B$, respectively. The terms $S(A,B)$ and $S(A|B)$ are the joint von Neumann entropy and the quantum conditional entropy of the system, respectively. Based on these definitions the expression of the QD can be rewritten as,
\begin{equation}
\mathcal{D}(A:B) = S(B) - S(A,B) + S(A|B). \label{discord_final_1}
\end{equation}
The quantum conditional entropy $S(A|B)$ is given as,
\begin{equation}
S(A|B) = \min_{\{\Pi_k\}}\sum_{k=1}^{{\cal H}_B}p_kS(\rho_{A|\Pi_k}), \label{quant_cond_ent}
\end{equation}
where ${\cal H}_B$ is the Hilbert space dimension of the subsystem $B$. In Eq.(\ref{quant_cond_ent}), $\rho_{A|\Pi_k}$ is the post-measurement state for the subsystem $A$ when a measurement is performed on the subsystem $B$, and $p_k=Tr(\Pi_k^\dagger\Pi_k\rho_{AB})$ is the probability associated with the measurement operators $\Pi_k$. The state $\rho_{A|\Pi_k}$ can be explicitly written as,
\begin{eqnarray}
\rho_{A|\Pi_k} &=& \frac{1}{p_k}Tr_B\left(\Pi_k\rho_{AB}\Pi_k\right). \label{state_after_meas}
\end{eqnarray}
In this paper, we are constructing the generalized Werner state of $n$ qubits as given in ref.\cite{siewert},\cite{cirac}. When we reduce this state to two qubits, it becomes the famous Werner state \cite{werner}. We study the QD and the logarithmic negativity between any single qubit ($B$ subsystem) and the remaining $n-1$ qubits ($A$ subsystem) of the generalized $n$-qubit Werner state. It is to be noted that we can calculate the QD by performing a measurement over any one of the subsystems $A$ or $B$. The subsystem $B$ may contain any arbitrary number of qubits between $1$ to $n-1$, and accordingly the subsystem size of $A$ will vary. However, here we are performing a measurement over subsystem $B$ and limiting it to one-qubit, because the minimization in Eq.(\ref{quant_cond_ent}) depends upon $2^m$ parameters of the measurement operators $\Pi_k$, where $m$ ($1\leq m <n$) is the number of qubits in the subsystem $B$. Therefore, for $m>1$ it is difficult to track the solution both analytically and numerically as it is non-trivial and there exist no generic prescription for the same. For $m=1$, the general measurement operators are,
\begin{align}
\Pi_1 = \mathbb{I}_A \otimes |u\kt_B {}_B\br u| && \text{and} && \Pi_2 = \mathbb{I}_A \otimes |v\kt_B {}_B\br v|, \label{measure_op}
\end{align}
where $|u\kt = \cos(\theta)|0\kt + e^{\textit{i}\phi}\sin(\theta)|1\kt$ and $|v\kt = \sin(\theta)|0\kt - e^{\textit{i}\phi}\cos(\theta)|1\kt$, and two parameters, $\theta$ and $\phi$, vary in the range, $0\leq\theta\leq\pi/2$ and $0\leq\phi\leq 2\pi$. In section \ref{sec_QD}, we analytically calculate QD between the subsystems $A$ and $B$, and in section \ref{N_QD}, we derive the linear relationship between QD and mixing probability $p$ at the thermodynamic limit ($n$ is very large, that is $n\rightarrow \infty$). The logarithmic negativity between the subsystems $A$ and $B$ has been investigated in section \ref{negv}.
\section{Calculation of the QD for the generalized $n$-qubit Werner state}
\label{sec_QD}
The QD for the two-qubit Werner state has already been studied in ref.\cite{zurek}, however in this paper we study the variation of the non-local correlations with respect to the total number of qubits $n$. The generalized $n$-qubit Werner state is given as,
\begin{equation}
\rho_{W_{AB}} = p|\phi\kt_{AB}\hspace{0.03cm}{}_{AB}\br \phi| + \frac{(1-p)}{2^n}\mathbb{I}_{AB} , \label{n_qubit_werner1}
\end{equation}
where $|\phi\rangle_{AB}$ is the $n$-qubit GHZ state \cite{ghz} given as, $|\phi\rangle_{AB} = \frac{1}{\sqrt{2}} \left(\right.|0\kt^{\otimes n-1}_{A}|0\rangle_B + |1\kt^{\otimes n-1}_{A}$ $|1\rangle_B \left.\right)$, and $p$ is the mixing probability such that $0\leq p\leq 1$. In Eq.(\ref{n_qubit_werner1}), $\mathbb{I}_{AB}/2^n$ is the $n$-qubit maximally mixed state. For calculating the QD between subsystems $A$ and $B$ of the state $\rho_{W_{AB}}$, the first step is the calculation of the von Neumann entropies in Eq.(\ref{discord_final_1}). To this end, we calculate the reduced density matrix $\rho_B$ as shown below,
\begin{eqnarray}
\rho_B &=& \left\{2^{n-1}\left(\frac{1-p}{2^n}\right) + \frac{p}{2}\right\}|0\kt\br 0| + \left\{2^{n-1}\left(\frac{1-p}{2^n}\right) + \frac{p}{2}\right\}|1\kt\br 1|, \\
&=& \frac{1}{2}\left\{|0\kt\br 0| + |1 \kt\br 1|  \right\}.
\end{eqnarray}
The eigenvalues of $\rho_B$ are $\frac{1}{2}$ and $\frac{1}{2}$. Therefore, the von Neumann entropy for the subsystem $B$ can be calculated as,
\begin{eqnarray}
S(B) &=& -\frac{1}{2}\log_2 \left(\frac{1}{2}\right) - \frac{1}{2}\log_2 \left(\frac{1}{2}\right)= 1. \label{n_qubit_sb}
\end{eqnarray}
Let us define $2^n = N$, for compactness of notation. The matrix representation of the state $\rho_{W_{AB}}$ is given below,
\begin{equation}
\rho_{W_{AB}} = \begin{bmatrix}
a_{11}&0&0&\dots&a_{1 N}\\
0&a_{22}&0&\dots& 0\\
0&0&a_{33}&\dots& 0\\
\vdots& & &\ddots&\vdots\\
a_{N 1}& & & & a_{N N}
\end{bmatrix}_{N \times N}. \label{n_qubit_werner_matrix}
\end{equation}
Using Eq.(\ref{n_qubit_werner1}), the matrix elements $a_{ij}$'s in Eq.(\ref{n_qubit_werner_matrix}) can be written as,
\begin{equation}
a_{11} =a_{N N} =  \left(\frac{1-p}{2^n} + \frac{p}{2}\right);\;\; a_{1 N} = a_{N 1} = \frac{p}{2}, \label{n_qubit_rest_factors1}
\end{equation}
and rest of the matrix elements are,
\begin{align}
a_{22} = a_{33} = a_{44} = \ldots = a_{N-1\;N-1} = \left(\frac{1-p}{2^n}\right). \label{n_qubit_direct_factors1}
\end{align}
The characteristic equation of the matrix written in Eq.(\ref{n_qubit_werner_matrix}) can be nicely factored out as,
\begin{eqnarray}
(a_{22}-\ld)(a_{33} - \ld)\cdots(a_{N-1\;N-1}-\ld)(\ld^2 - \ld(a_{11}+ a_{NN}) + a_{11}a_{NN} \nonumber \\ - a_{N1}a_{1N}) = 0. \label{n_qubit_character}
\end{eqnarray}
The roots (eigenvalues) of the above characteristic equation are given below,
\begin{equation}
\ld_1 = \ld_2 = \ld_3= \ldots =\ld_{N-2} = \left(\frac{1-p}{2^n}\right).\label{eigv_N2}
\end{equation}
The remaining two eigenvalues $\ld_{N-1}$ and $\ld_N$ can be written as,
\begin{eqnarray}
\ld_{N-1} &=& \frac{1}{2}\left\{(a_{11}+a_{N N}) + \sqrt{(a_{11}-a_{N N})^2 + 4 a_{1 N}a_{N 1}}\right\}\nonumber \\
&=&\left(\frac{1 + (2^n - 1)p}{2^n}\right);\label{eigv_N1} \\
\ld_{N} &=& \frac{1}{2}\left\{(a_{11}+a_{N N}) - \sqrt{(a_{11}-a_{N N})^2 + 4 a_{1 N}a_{N 1}}\right\}\nonumber \\
&=&\left(\frac{1-p}{2^n}\right).\label{eigv_N}
\end{eqnarray}%
Based on the eigenvalues calculated from Eqs.(\ref{eigv_N2}-\ref{eigv_N}), we can directly write the joint entropy $S(A,B)$ as follows,
\begin{eqnarray}
S(A,B) &=&  (N-2) \left\{-\left(\frac{1-p}{2^n}\right)\log_2 \left(\frac{1-p}{2^n}\right)\right\} -\left(\frac{1+(2^n -1)p}{2^n}\right) \nonumber \\ && \log_2 \left(\frac{1+(2^n -1)p}{2^n}\right) -  \left(\frac{1-p}{2^n}\right)\log_2 \left(\frac{1-p}{2^n}\right),\\
S(A,B) &=&- (2^n - 1) \left(\frac{1-p}{2^n}\right)\log_2 \left(\frac{1-p}{2^n}\right) -\left(\frac{1+(2^n -1)p}{2^n}\right) \nonumber \\ && \log_2 \left(\frac{1+(2^n -1)p}{2^n}\right). \label{n_qubit_sab}
\end{eqnarray}
The next step is to find the quantum conditional entropy $S(A|B)$ of the system by performing a measurement over the subsystem $B$ by the set of operators defined in Eq.(\ref{measure_op}). To this end, we operate the measurement operator $\Pi_1$ on the state $\rho_{W_{AB}}$, as shown below,
\begin{eqnarray}
\Pi_1 \rho_{W_{AB}}\Pi_1 &=& \frac{p}{2}\cos^2(\theta)|0\kt^{\otimes n-1}_A |u\kt_B{}_A\br 0|^{\otimes n-1} {}_B\br u|  + \frac{p}{2}e^{\textit{i}\phi}\cos(\theta)\sin(\theta)\nonumber \\ &&|0\kt^{\otimes n-1}_A|u\kt_B {}_A\br 1|^{\otimes n-1}{}_B\br u| + \frac{p}{2}e^{-\textit{i}\phi}\cos(\theta)\sin(\theta)|1\kt^{\otimes n-1}_A|u\kt_B \nonumber \\ && {}_A\br 0|^{\otimes n-1}{}_B\br u|  + \frac{p}{2}\sin^2(\theta)|1\kt_A^{\otimes n-1}|u\kt_B{}_A\br 1|^{\otimes n-1}{}_B\br u| \nonumber \\ && + \sum_{i=0}^{2^{n-1}-1}\left(\frac{1-p}{2^n}\right)|i\kt_A |u\kt_{B}{}_{A}\br i| {}_{B}\br u|\{\cos^2(\theta) + \sin^2(\theta)\}.
\end{eqnarray}
The associated probability for the outcome of the measurement operator $\Pi_1$ is $p_1$, and it can be calculated as,
\begin{eqnarray}
p_1 &=& tr(\Pi_1\rho_{W_{AB}}\Pi_1),\\
&=& \frac{p}{2}\cos^2(\theta) + \frac{p}{2}\sin^2(\theta) + 2^{n-1}\left(\frac{1-p}{2^n}\right)=  \frac{p}{2} + \left(\frac{1-p}{2}\right) = \frac{1}{2}. \label{n_qubit_p1}
\end{eqnarray}
The post measurement state for the subsystem $A$ is written as,
\begin{eqnarray}
\rho_{A|\Pi_1} &=& \frac{1}{p_1}tr_B(\Pi_1\rho_{W_{AB}}\Pi_1),\\
&=& 2 \times \frac{p}{2} \cos^2(\theta)|0\kt^{\otimes n-1}\br 0|^{\otimes n-1}  + 2 \times \frac{p}{2}e^{\textit{i}\phi}\cos(\theta)\sin(\theta)|0\kt^{\otimes n-1}\br 1|^{\otimes n-1} \nonumber \\ && + 2 \times \frac{p}{2}e^{-\textit{i}\phi}\cos(\theta)\sin(\theta)|1\kt^{\otimes n-1}\br 0|^{\otimes n-1} + 2 \times \frac{p}{2} \sin^2(\theta)|1\kt^{\otimes n-1}\br 1|^{\otimes n-1} \nonumber \\ && + 2\sum_{i =0}^{2^{n-1}-1}\left(\frac{1-p}{2^n}\right)|i\kt\br i|,\label{rdm_A}
\end{eqnarray}
The reduced density matrix of the subsystem $A$ in Eq.(\ref{rdm_A}) can be further simplified to be,
\begin{eqnarray}
\rho_{A|\Pi_1} &=&  p \cos^2(\theta)|0\kt^{\otimes n-1}\br 0|^{\otimes n-1}  + pe^{\textit{i}\phi}\cos(\theta)\sin(\theta)|0\kt^{\otimes n-1}\br 1|^{\otimes n-1} \nonumber \\ && + pe^{-\textit{i}\phi}\cos(\theta)\sin(\theta)|1\kt^{\otimes n-1}\br 0|^{\otimes n-1} + p \sin^2(\theta)|1\kt^{\otimes n-1}\br 1|^{\otimes n-1}\nonumber \\ && + \sum_{i=0}^{2^{n-1}-1}\left(\frac{1-p}{2^{n-1}}\right)|i\kt\br i|. \label{n_qubit_ra1}
\end{eqnarray}
Let us define $2^{n-1} = L$. The matrix representation of the reduced density matrix $\rho_{A|\Pi_1}$ becomes,
\begin{equation}
\rho_{A|\Pi_1} = \begin{bmatrix}
b_{11}&0&0&\dots&b_{1 L}\\
0&b_{22}&0&\dots& 0\\
0&0&b_{33}&\dots& 0\\
\vdots& & &\ddots&\vdots\\
b_{L 1}& 0 &0 &\dots & b_{LL}
\end{bmatrix}_{LL}, \label{n_qubit_ra1_matrix}
\end{equation}
the corner matrix elements of the above matrix are,
\begin{align}
b_{11} &= p\cos^2{\theta} + \left(\frac{1-p}{2^{n-1}}\right); & b_{L L} = p\sin^2{\theta} + \left(\frac{1-p}{2^{n-1}}\right);\\
b_{1 L} &= pe^{\textit{i}\phi}\cos(\theta)\sin(\theta); &
b_{L 1}= pe^{-\textit{i}\phi}\cos(\theta)\sin(\theta),
\end{align}
and remaining matrix elements are written below,
\begin{align}
b_{22} = b_{33} =\ldots= b_{L-1\;L-1} = \left(\frac{1-p}{2^{n-1}}\right).
\end{align}
The characteristic equation for the matrix given in Eq.(\ref{n_qubit_ra1_matrix}) is shown below,
\begin{eqnarray}
(b_{22}-\beta)(b_{33} - \beta)\cdots(b_{L-1\;L-1}-\beta)\big(\beta^2 - \beta(b_{11}+ b_{LL}) + b_{11}b_{LL}\nonumber \\- b_{L1}b_{1L}\big) = 0. \label{n_qubit_character2}
\end{eqnarray}
The eigenvalues of the matrix in Eq.(\ref{n_qubit_ra1_matrix}), based upon the Eq.(\ref{n_qubit_character2}), can be calculated as,
\begin{align}
\beta_1 = \beta_2 =\ldots=\beta_{L-2} = \left(\frac{1-p}{2^{n-1}}\right),
\end{align}
and remaining two eigenvalues are,
\begin{align}
\beta_{L-1} = \left(\frac{1+(2^{n-1}-1)p}{2^{n-1}}\right) && \text{and} && \beta_L = \left(\frac{1-p}{2^{n-1}}\right).
\end{align}
Now we will calculate the von Neumann entropy of the state defined in Eq.(\ref{n_qubit_ra1}) as it is an important requirement to calculate the quantum conditional entropy given by the Eq.(\ref{quant_cond_ent}),
\begin{eqnarray}
S(\rho_{A|\Pi_1}) &=&  - (2^{n-1} - 2) \left(\frac{1-p}{2^{n-1}}\right)\log_2 \left(\frac{1-p}{2^{n-1}}\right) -\left(\frac{1+(2^{n-1} -1)p}{2^{n-1}}\right) \nonumber \\ && \log_2 \left(\frac{1+(2^{n-1} -1)p}{2^{n-1}}\right) -  \left(\frac{1-p}{2^{n-1}}\right)\log_2 \left(\frac{1-p}{2^{n-1}}\right),
\end{eqnarray}
which can be further simplified into the following equation,
\begin{eqnarray}
 S(\rho_{A|\Pi_1}) &=&- (2^{n-1} - 1) \left(\frac{1-p}{2^{n-1}}\right)\log_2 \left(\frac{1-p}{2^{n-1}}\right) -\left(\frac{1+(2^{n-1} -1)p}{2^{n-1}}\right) \nonumber \\ && \log_2 \left(\frac{1+(2^{n-1} -1)p}{2^{n-1}}\right). \label{n_qubit_sa1}
\end{eqnarray}%
The post measurement state for the subsystem $A$ after the measurement is performed on subsystem $B$ using measurement operator $\Pi_2$ is $\rho_{A|\Pi_2}$, and it has the same matrix structure as that of Eq.(\ref{n_qubit_ra1_matrix}). The associated probability for the state being $\rho_{A|\Pi_2}$ is $p_2$ which is $1/2$. From the above discussion, the quantum conditional entropy can be written as,
\begin{equation}
S(A|B) = S(\rho_{A|\Pi_1}). \label{cond_entropy}
\end{equation}
Note that the expression for quantum conditional entropy in Eq.(\ref{cond_entropy}), is independent of the variables $\theta$ and $\phi$, which are present in the general mathematical expression of the measurement operators in Eq.(\ref{measure_op}). It leads to the fact that the minimization of the quantum conditional entropy over the measurement operators is not required. Using Eqs.(\ref{n_qubit_sb}, \ref{n_qubit_sab}, \ref{cond_entropy}) in Eq.(\ref{discord_final_1}), the expression for the QD finally boils down to,
\begin{eqnarray}
\mathcal{D}(A:B) &=&  1 + (2^n - 1) \left(\frac{1-p}{2^n}\right)\log_2 \left(\frac{1-p}{2^n}\right)+\left(\frac{1+(2^n -1)p}{2^n}\right) \nonumber \\ && \log_2 \left(\frac{1+(2^n -1)p}{2^n}\right) - (2^{n-1} - 1) \left(\frac{1-p}{2^{n-1}}\right)\log_2 \left(\frac{1-p}{2^{n-1}}\right)\nonumber \\ && -\left(\frac{1+(2^{n-1} -1)p}{2^{n-1}}\right) \log_2 \left(\frac{1+(2^{n-1} -1)p}{2^{n-1}}\right). \label{main_qd_eq}
\end{eqnarray}%
Note that Eq.(\ref{main_qd_eq}) is dependent of probability $p$ and the total number of qubits $n$ which also appears in Eq.(\ref{n_qubit_werner1}). We study the variation of QD with $p$ for different number of qubits $n$. To this end, we plot the variation of QD against $p$ for different value of $n$ as shown in Fig.[\ref{main_graph}]. We have plotted the numerical result as inset $(a)$ in Fig.[\ref{main_graph}], we were able to calculate the QD till $n=12$ from a normal computer numerically, the main plot in Fig.[\ref{main_graph}] is an analytical plot giving insights for higher values of $n$ based on Eq.(\ref{main_qd_eq}). It is also to be noted that the analytical plot and numerical plot matches well for all the values of $n$ (till $n=12$) with a precision of the order of $10^{-6}$. For $n=2$, the variation between QD and $p$ exactly matches the result for the two-qubit Werner state as discussed in ref.\cite{zurek} as a self consistent check. From Fig.[\ref{main_graph}] it is evident that for a given value of $p$ the QD increases as $n$ increases, which indicates that the non-local correlations are dependent upon $n$ as given in Eq.(\ref{main_qd_eq}). However, the area under the curve showing the variation between QD and $p$ saturates for large values of $n$ (like $n\geq20$), non-trivial analytical calculations show that for large values of $n$, typically $2^n \rightarrow \infty$, this curve approaches a straight line with unit slope, a phenomenon which we call as the ``saturation of quantum discord". The proof of the aforesaid statement has been completely derived in the subsequent section. Another important observation from Fig.[1] is that the QD is a convex function of the mixing probability $p$ for any $n$. A noteworthy point is that, as the value of $n$ increases and approaches the thermodynamic limit, it becomes the convex roof at saturation, which is nothing but the straight line with unit slope. These observations are proved in detail in appendix A.
\begin{figure}[h]
  \includegraphics[width=0.99\textwidth]{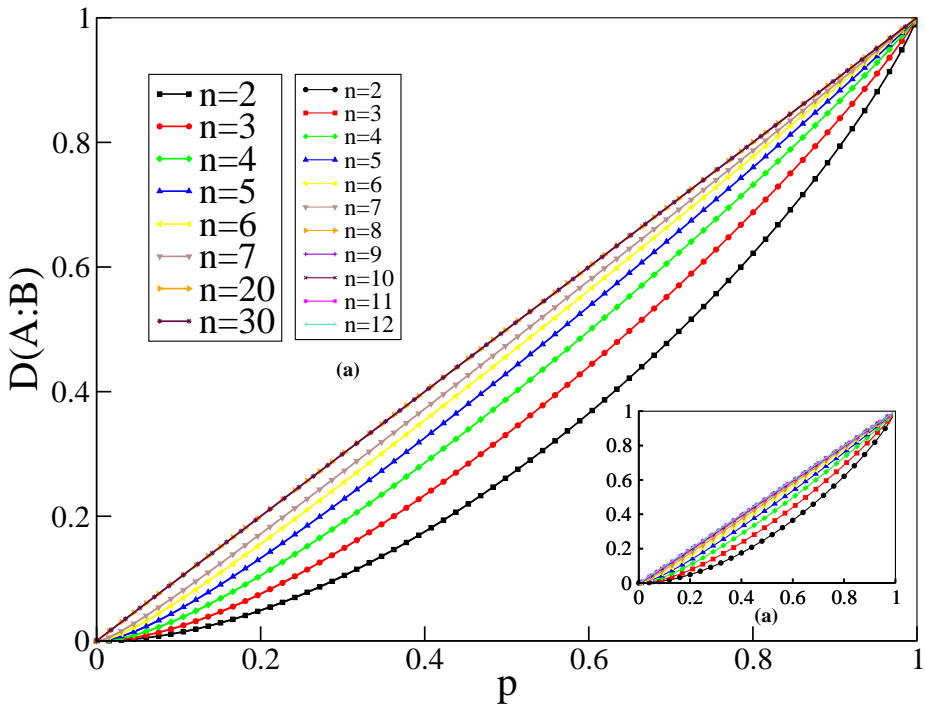}
\caption{Variation of the $\mathcal{D}(A:B)$ with respect to $p$ for different number of qubits $n$. Inset $(a)$ shows the numerical plot between the $\mathcal{D}(A:B)$ versus $p$ for different values of $n$ till $n=12$. Legend box $(a)$ corresponds to the inset plot $(a)$. It is seen that as $n$ increases, the numerical plot tends towards the analytical plot.}
\label{main_graph}
\end{figure}%
\section{The value of quantum discord at the thermodynamic limit (of $n\rightarrow \infty$)} \label{N_QD}
For a generalized $n$-qubit Werner state, the quantum discord, $\mathcal{D}(A:B)$, between any single qubit and the remaining $(n-1)$ qubits was calculated in Eq.(\ref{main_qd_eq}). Now we can state the following lemma about the nature of QD for higher values of $n$.
\begin{lemma}
	For large values of $n$ (as $n\rightarrow \infty$, that is, $2^n\rightarrow \infty$), which is the thermodynamic limit, there exists a linear relationship between the QD and the mixing probability $p$ in the generalized $n$-qubit Werner state.
\end{lemma}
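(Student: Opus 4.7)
The plan is to take the limit $n\to\infty$ directly in the closed-form expression for $\mathcal{D}(A:B)$ given in Eq.(\ref{main_qd_eq}) and show that it collapses to $\mathcal{D}(A:B)=p$. The right-hand side has four logarithmic contributions that naturally pair up: two involving the small, highly degenerate eigenvalues $\frac{1-p}{2^n}$ and $\frac{1-p}{2^{n-1}}$ (call these $T_1$ and $T_3$), and two involving the single large eigenvalues $\frac{1+(2^n-1)p}{2^n}$ and $\frac{1+(2^{n-1}-1)p}{2^{n-1}}$ (call these $T_2$ and $T_4$). My strategy is to expand each logarithm using $\log_2(x/2^k)=\log_2 x - k$, so that the asymptotics in $n$ become transparent, and then track the surviving terms as $2^n\to\infty$.

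First I would handle the degenerate-eigenvalue pair. Writing $T_1 = (2^n-1)\tfrac{1-p}{2^n}\bigl[\log_2(1-p)-n\bigr]$, the prefactor tends to $1-p$ and the bracket splits into a bounded piece $(1-p)\log_2(1-p)$ plus a linear piece $-n(1-p)$. Doing the same for $T_3$ with $n$ replaced by $n-1$, the $(1-p)\log_2(1-p)$ contributions cancel between $T_1$ and $T_3$ (they enter with opposite signs), and the linear pieces combine as $-n(1-p)+(n-1)(1-p) = -(1-p) = p-1$. The $O(2^{-n})$ corrections coming from $(2^n-1)/2^n = 1-2^{-n}$ and similarly for $2^{n-1}$ vanish, since quantities like $n\cdot 2^{-n}$ go to zero.

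Next I would handle the large-eigenvalue pair. Since $\frac{1+(2^n-1)p}{2^n}\to p$ and $\frac{1+(2^{n-1}-1)p}{2^{n-1}}\to p$ as $n\to\infty$, both $T_2$ and $T_4$ approach $p\log_2 p$ in absolute value with opposite signs, and hence cancel each other in the limit. Combining everything with the leading $1$ in Eq.(\ref{main_qd_eq}) yields
\begin{equation}
\lim_{n\to\infty}\mathcal{D}(A:B) \;=\; 1 + (p-1) + p\log_2 p - p\log_2 p \;=\; p,
\end{equation}
which is the straight line of unit slope claimed in the lemma.

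The only real delicacy is at the boundary values $p=0$ and $p=1$, where individual terms contain $0\cdot\log_2 0$ factors; these must be interpreted in the usual continuous-extension sense, and one should note that the pointwise limit is continuous in $p$ on $[0,1]$, so the convergence extends to the closed interval. I expect the main technical step to be a clean bookkeeping of the vanishing $O(2^{-n})$ and $O(n\,2^{-n})$ remainders so that the cancellations are rigorous and not merely formal; once the four terms are expressed with an explicit separation into "$\log$ of numerator" and "$-n$" parts, this bookkeeping is straightforward.
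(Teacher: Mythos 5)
Your proposal is correct and follows essentially the same route as the paper: both take the limit $2^n\to\infty$ directly in the closed-form expression of Eq.(\ref{main_qd_eq}), cancel the two large-eigenvalue terms against each other as $p\log_2 p - p\log_2 p$, and reduce the two degenerate-eigenvalue terms to $(1-p)\log_2(1/2)=-(1-p)$, yielding $\mathcal{D}(A:B)\to p$. The only difference is bookkeeping style (you expand $\log_2(x/2^k)=\log_2 x - k$ explicitly, while the paper keeps the ratio of the two logarithms), and your remark about interpreting $0\cdot\log_2 0$ at the endpoints is a minor extra care the paper omits.
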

\begin{proof}
The mathematical expression for the QD in Eq.(\ref{main_qd_eq}) can be rewritten as,
\begin{eqnarray}
\mathcal{D}(A:B) &=& 1+\left(\frac{(1-p)(2^n-1)}{2^n}\right)\log_2\left(\frac{1-p}{2^n}\right)+\left(\frac{1+(2^n-1)p}{2^n}\right) \nonumber\\ &&\log_2\left(\frac{1+(2^n-1)p}{2^n}\right)-\left(\frac{(1-p)(2^{n-1}-1)}{2^{n-1}}\right)\log_2\left(\frac{1-p}{2^{n-1}}\right)\nonumber\\ &&-\left(\frac{1+(2^{n-1}-1)p}{2^{n-1}}\right)\log_2\left(\frac{1+(2^{n-1}-1)p}{2^{n-1}}\right),
\end{eqnarray}
after some straightforward simplifications we have,
\begin{eqnarray}
\mathcal{D}(A:B) &=&1+(1-p)\left(1-\frac{1}{2^n}\right)\log_2\left(\frac{1-p}{2^n}\right)+\left(\frac{1}{2^n}+ \left(1-\frac{1}{2^n} \right)p  \right) \nonumber\\ &&\log_2\left(\frac{1}{2^n}+ \left(1-\frac{1}{2^n} \right)p  \right)-(1-p)\left(1-\frac{1}{2^{n-1}} \right)\log_2\left(\frac{1-p}{2^{n-1}} \right)\nonumber\\ &&-\bigg(\frac{1}{2^{n-1}}+ \left(1-\frac{1}{2^{n-1}}\right)  p\bigg)\log_2\left(\frac{1}{2^{n-1}}+ \left(1-\frac{1}{2^{n-1}}\right)p\right).
\end{eqnarray}%
Now consider the limiting case for large values of $n$, that is, $2^n \rightarrow\infty$, the above equation transforms into the following,
\begin{eqnarray}
\lim_{2^n\to\infty}\mathcal{D}(A:B) &=& 1+(1-p)\log_2\left(\frac{1-p}{2^n} \right)+p\log_2(p)-(1-p)\log_2\left(\frac{1-p}{2^{n-1}} \right)\nonumber\\ && -p\log_2(p),\\
&=& 1+(1-p)\log_2\left(\frac{1-p}{2^n} \right)-(1-p)\log_2\left(\frac{1-p}{2^{n-1}}\right),\\
&=& 1+(1-p)\log_2\left(\frac{1}{2} \right),
\end{eqnarray}
which finally gives,
\begin{eqnarray}
\lim_{2^n\to\infty}\mathcal{D}(A:B) &=& p. \label{lin_D_p}
\end{eqnarray}
It is evident from Eq.(\ref{lin_D_p}) that there exists a linear relationship between $\mathcal{D}(A:B)$ and $p$, when $2^n\to\infty$. Now it is trivial to verify that the slope of the line as given in Eq.(\ref{lin_D_p}) is unity, which means that,
\begin{equation}
\lim_{2^n\to\infty}\frac{\partial\mathcal{D}(A:B)}{\partial p}=1.
\end{equation}
Thus our claim is proved.
\end{proof}%
\section{Logarithmic negativity in the $n$-qubit Werner state}
\label{negv}
Till now we understood how non-local correlations varies with increasing number of qubits  $n$ and what really happens to QD as a function of $p$ in the thermodynamic limit of the number of qubits. Now we analyze more deeply into how the entanglement content varies between any single qubit and the remaining $n-1$ qubits for the state as given in Eq.(\ref{n_qubit_werner1}). However, there is no entanglement measure reported till now for a mixed state having total number of qubits $n>2$ which has a closed form expression. Therefore, we use logarithmic negativity \cite{plenio}, which is an entanglement monotone, and it gives an indicator of the entanglement content in the state. To this end, we study the variation of logarithmic negativity with respect to $p$, for different values of $n$. We perform the partial transpose on the subsystem $B$ of the density matrix in Eq.(\ref{n_qubit_werner1}), after some calculations, the final matrix after the partial transposition operation can be written as,
\begin{eqnarray}
\rho_{W_{AB}}^{\Gamma_B} &=& \frac{p}{2}\left[\right.|1\kt \otimes |0\kt^{\otimes n-1}\br 0|  \otimes \br 1|^{\otimes n-1}+ |0\kt \otimes |1\kt^{\otimes n-1}\br 1| \otimes \br 0|^{\otimes n-1} + |0\kt^{\otimes n}\br 0|^{\otimes n} \nonumber \\ && +|1\kt^{\otimes n}\br 1|^{\otimes n}\left.\right] +\sum_{i=0}^{2^n -1}\left(\frac{1-p}{2^n}\right) |i\kt\br i|. \label{partial_eq_1}
\end{eqnarray}
The general structure of the matrix defined in Eq.(\ref{partial_eq_1}) is equivalent to the matrix defined below,
\begin{equation}
\rho_{W_{AB}}^{\Gamma_B} = \begin{bmatrix}
c_{11}&0&0&\dots&\dots&0\\
0&c_{22}&0&\dots& \dots&0\\
\vdots& &\ddots & & &\vdots\\
0&0&\dots&c_{\frac{N}{2}\frac{N}{2}}&c_{\frac{N}{2}\frac{N}{2}+1}\dots& 0\\
0&0&\dots&c_{\frac{N}{2}+1\frac{N}{2}}&c_{\frac{N}{2}+1\frac{N}{2}+1}\dots& 0\\
\vdots& & &&\ddots &\vdots\\
0& & & & &c_{N N}
\end{bmatrix}_{N \times N}. \label{partial_neg_mat}
\end{equation}
The matrix in Eq.(\ref{partial_neg_mat}) has all the elements are zero except for the diagonal elements, and two off diagonal elements, which are written as follows, $c_{\frac{N}{2}\frac{N}{2}+1}$ and $c_{\frac{N}{2}+1\frac{N}{2}}$. The non-zero matrix elements can be written as follows,
\begin{eqnarray}
c_{11} &=& c_{NN} = \left(\frac{1+\left(2^{n-1}-1\right)p}{2^n}\right);\\
c_{\frac{N}{2}\frac{N}{2}+1} &=& c_{\frac{N}{2}+1\frac{N}{2}} = \frac{p}{2};\\
c_{22}&=&\ldots=c_{N-1N-1} = \left(\frac{1-p}{2^n}\right). \label{par_ele_def}
\end{eqnarray}
The characteristic equation for the matrix defined in Eq.(\ref{partial_neg_mat}) can be written in a factorizable form as,
\begin{eqnarray}
(c_{22}-\ld')(c_{33} - \ld')\cdots(c_{N-1\;N-1}-\ld')\left(\ld'^2 - \ld'(c_{11}+ c_{NN}) + c_{11}c_{NN} \right. \nonumber \\ \left.- c_{\frac{N}{2}\frac{N}{2}+1}c_{\frac{N}{2}+1 \frac{N}{2}}\right) = 0. \label{c_char}
\end{eqnarray}
The roots (eigenvalues) of the characteristic equation in Eq.(\ref{c_char}) are given below,
\begin{eqnarray}
\ld'_1 &=& \left(\frac{1-p}{2^n}\right) - \frac{p}{2}; \label{partial_eigen1}\\
\ld'_2 &=&\ld'_{N-1} = \ld'_{N}= \left(\frac{1-p}{2^n}\right) + \frac{p}{2}; \label{partial_eigen2}\\
\ld'_3 &=& \ld'_4 = \cdots\cdots =\ld'_{N -2} = \left(\frac{1-p}{2^n}\right). \label{partial_eigen}
\end{eqnarray}
All the obtained eigenvalues in Eqs.(\ref{partial_eigen1} - \ref{partial_eigen}) are positive for any value of $p$ except the eigenvalue $\ld_1'$. In ref.\cite{peres}, the author has shown that the necessary criteria for a density matrix to be separable is to have non-negative eigenvalues of the partially transposed density matrix. However, the subsystems $A$ and $B$ in the generalized $n$-qubit Werner state are inseparable for a range of mixing probabilities $p$, when the eigenvalue $\ld_1'$ is negative. The range of $p$ where $\ld_1'$ is negative is shown below,
\begin{equation}
\left(\frac{1-p}{2^n}\right) - \frac{p}{2} < 0,
\end{equation}
which boils down to,
\begin{equation}
p> \frac{1}{1+ 2^{n-1}}. \label{condition1}
\end{equation}
For $n=2$ the above equation transform into $p>1/3$, which is consistent with the concurrence measure of the two-qubit Werner state \cite{wootters},\cite{dimer}. The value of concurrence of two-qubit Werner state is zero  (separable state) in the range $0\leq p<1/3$, and non-zero (inseparable state) in the range $1/3\leq p\leq 1$. From Eq.(\ref{condition1}), we observe that the range of $p$ for which the state is separable decreases as $n$ increases. To verify Eq.(\ref{condition1}) with respect to state given in Eq.(\ref{n_qubit_werner1}), we calculate the logarithmic negativity between the subsystems $A$ and $B$ of the state. As a first step we calculate the trace norm of the matrix $\rho_{W_{AB}}^{\Gamma_B}$ as shown below,
\begin{eqnarray}
||\rho_{W_{AB}}^{\Gamma_B}|| &=& Tr\left(\sqrt{\left(\rho_{W_{AB}}^{\Gamma_B}\right)^{\dagger}\rho_{W_{AB}}^{\Gamma_B}}\right) = Tr\left(\sqrt{(\rho_{W_{AB}}^{\Gamma_B})^2}\right),\label{tr_norm}
\end{eqnarray}
using the fact that the trace of the matrix is the sum of its eigenvalues in Eqs.(\ref{partial_eigen1} - \ref{partial_eigen}), we can rewrite Eq.(\ref{tr_norm}) as,
\begin{equation}
||\rho_{W_{AB}}^{\Gamma_B}|| = \sum_{j=1}^{2^n}\sqrt{(\ld_j')^2}. \label{tr_norm2}
\end{equation}
Now the logarithmic negativity can be computed as,
\begin{equation}
\mathcal{N}_L = \log_2\left(||\rho_{W_{AB}}^{\Gamma_B}||\right)=\log_2\bigg(\sum_{j=1}^{2^n}\sqrt{(\ld_j')^2} \bigg). \label{log_neg}
\end{equation}
For the interval $0\leq p \leq 1/(1+ 2^{n-1})$, the eigenvalue $\lambda'_1$ is positive, therefore the value of the trace norm in Eq.(\ref{tr_norm2}) will be unity, and by the definition of logarithmic negativity ($\mathcal{N}_L$) in Eq.(\ref{log_neg}), it will be zero, this indicates that the state is unentangled in the above interval. However, in the interval $1/(1+ 2^{n-1}) < p \leq 1$, the trace norm as defined in Eq.(\ref{tr_norm2}) reduces to,
\begin{equation}
||\rho_{W_{AB}}^{\Gamma_B}||	= 1 - 2\ld_1 = \frac{(2^{n-1} +1)p + (2^{n-1}-1)}{2^{n-1}}.\label{new_tr_norm}
\end{equation}
Based on Eq.(\ref{new_tr_norm}), the logarithmic negativity becomes,
\begin{equation}
\mathcal{N}_L = \log_2\left(\frac{(2^{n-1} +1)p + (2^{n-1}-1)}{2^{n-1}}\right). \label{final_nl}
\end{equation}
From Eq.(\ref{final_nl}) we observe that $\mathcal{N}_L$ in the interval $1/(1+ 2^{n-1}) < p \leq 1$ is non-zero (inseparable), and is dependent on both $n$ and $p$. In Fig.[\ref{log_neg_fig}], we have plotted $\mathcal{N}_L$ versus $p$ for different values of $n$, and we observe that the range of $p$ for which the state is inseparable increases, not only that, even the entanglement content also increases as we increase $n$. The increment in the range of $p$ for which the state remains inseparable is non-uniform which is basically a difference between $\frac{1}{1+2^{k-1}}$ and $\frac{1}{1+2^{k}}$ where, $2\leq k \leq (n-1)$. However, for very large values of $n$ ($2^n\rightarrow\infty$), it is easy to see that, the logarithmic negativity as given in Eq.(\ref{final_nl}) reduces to a simple expression,
\begin{equation}
\lim_{2^n\to\infty} \mathcal{N}_L = \log_2(1+p). \label{long_range_nl}
\end{equation}
\begin{figure*}[h]
  \includegraphics[width=0.95\textwidth]{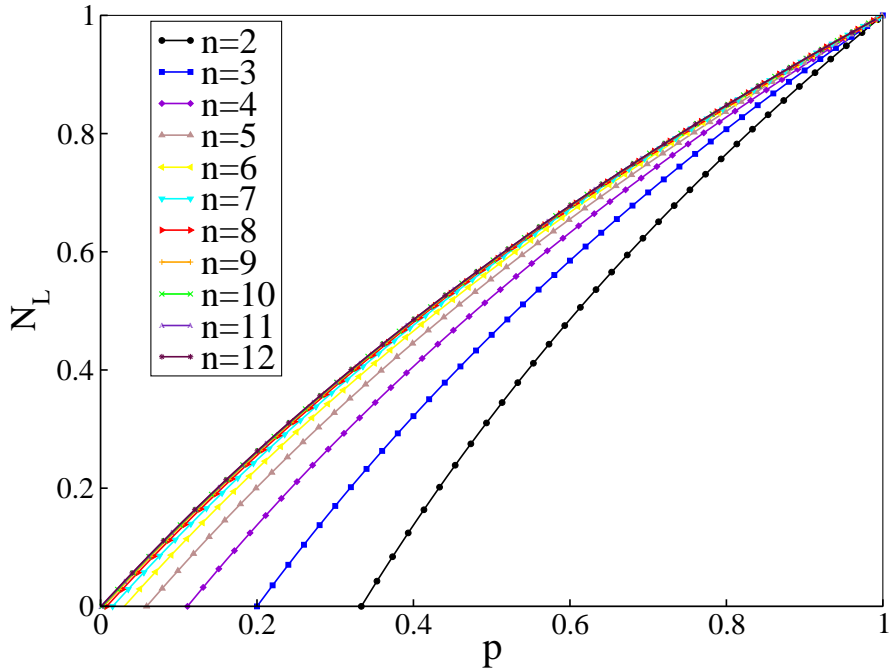}
\caption{Numerical results showing the variation of $\mathcal{N}_L$ with respect to $p$ for different number of qubits $n$ indicating a non-uniform behaviour.}
\label{log_neg_fig}
\end{figure*}%
As we see in Fig.[2], the logarithmic negativity $\mathcal{N}_L$ is a concave function of the mixing probability $p$. A proof of the preceding statement can be found in appendix B. \\
Thus it can be concluded that the increase in the entanglement content saturates for large value of $n$. In Fig.[\ref{log_neg_fig}], we observe that for higher number of qubits, almost for the entire range of $p$ ($0\leq p \leq 1$) the state remains entangled, which concurs with Eq.(\ref{condition1}). We now study the change of logarithmic negativity with respect to the variable $p$, in order to gain more insights on the rate of change in the entanglement content with respect to $p$, the first order derivative of $\mathcal{N}_L$ with respect to $p$ is computed as below, 
\begin{equation}
\frac{\partial \mathcal{N}_L}{\partial p} =\frac{1}{\ln2}\left( \frac{\left(2^{n-1}+1\right)}{\left(2^{n-1}+1\right)p + \left(2^{n-1}-1\right)}\right) = \frac{1}{\ln2\left(p+\dfrac{\left(2^{n-1}-1\right)}{\left(2^{n-1}+1\right)}\right)}. \label{varywith_p}
\end{equation}%
For very large values of $n$, it is simple to see that Eq.(\ref{varywith_p}) transforms into the following,
\begin{equation}
\lim_{2^n\to\infty}\frac{\partial \mathcal{N}_L}{\partial p}=\frac{1}{\ln2}\left(\frac{1}{1+p}\right),\label{large_varywith_p}
\end{equation}%
which is the first order derivative of Eq.(\ref{long_range_nl}) with respect to $p$. From Eq.(\ref{large_varywith_p}), we can conclude that for large values of $n$, the entanglement content in the state does not rise sharply as compared to the smaller values of $n$ in the system, which is also evident from Fig.[\ref{log_neg_fig}].
\section{Conclusion}
\label{conclusion}
In this paper we have constructed a generalized $n$-qubit Werner state, and studied it in extensive detail with regard to QD and logarithmic negativity for any single qubit ($B$ subsystem) and the remaining $n-1$ qubits ($A$ subsystem) of a bipartite split of the state. We have proved that as the total number of qubits ($n$) increases, the QD as a function of $p$ saturates to a straight line and in fact, in the thermodynamic limit of the number of qubits tending to infinity, the QD is a straight line with unit slope, and this observation is backed by our numerical results also. The convexity property of QD with respect to the mixing probability $p$ has been discussed in detail with relevant analytical and numerical computations. We have then extended our discussion to the variation of entanglement content between the subsystems $A$ and $B$ using logarithmic negativity, subjected to the changes in the parameters $p$ (mixing probability) and $n$ (total number of qubits). Analytically as well as numerically it has been proved that, the range of $p$ for which the state is inseparable, increases. The entanglement content also increase as we increase $n$. The concave nature of logarithmic negativity with respect to the mixing probability $p$ has been observed and studied in detail. We like to point out that this paper discusses the non-local correlations between a single qubit and the remaining $n-1$ qubits. However, it is an open question to extend this for a bipartite system containing an arbitrary number of qubits in each of its parts, for the state under consideration. This is due to the fact that the complexities in calculating QD in such cases dramatically increases due to exponential increase in the parameters needed to be minimized as discussed earlier in the introduction.
\begin{acknowledgements}
We would like to thank Ms. Payal D. Solanki for useful discussions.
\end{acknowledgements}
%
\begin{appendices}
	\section{Convexity of Quantum Discord}
	\renewcommand{\theequation}{A.\arabic{equation}}
	\setcounter{equation}{0}
	It can be seen from Fig.[1] ($\mathcal{D}(A:B)$ versus $p$) that the quantum discord is a convex function over the range of mixing probability $p$. We can prove analytically as well as numerically that the expression for quantum discord written in Eq.(\ref{eq.1}) below is a convex function for $0<p<1$. To this end, we will use results from differential calculus which says that the function $\mathcal{D}(A:B)$ is a convex function in the aforesaid range of $p$, if $\frac{\partial^2 \mathcal{D}(A:B)}{\partial p^2} > 0$ for all values of $n$.
	We start by writing the analytical expression of $\mathcal{D}(A:B)$ as,
	\begin{eqnarray}
		\mathcal{D}(A:B) &=&  1 + (2^n - 1) \left(\frac{1-p}{2^n}\right)\log_2 \left(\frac{1-p}{2^n}\right)+\left(\frac{1+(2^n -1)p}{2^n}\right) \nonumber \\ && \log_2 \left(\frac{1+(2^n -1)p}{2^n}\right) - (2^{n-1} - 1) \left(\frac{1-p}{2^{n-1}}\right)\log_2 \left(\frac{1-p}{2^{n-1}}\right) \nonumber \\ &&-\left(\frac{1+(2^{n-1} -1)p}{2^{n-1}}\right) \log_2 \left(\frac{1+(2^{n-1} -1)p}{2^{n-1}}\right). \label{qd_eq} \label{eq.1}
	\end{eqnarray}
	Now we will rearrange the above equation in terms of new variables $x$, $y$ and $z$ as shown below,
	\begin{eqnarray}
		x&=& 1-p, \label{eq.2}\\
		y&=& 1+ (2^n -1)p,\\
		z&=& 1 + (2^{n-1}-1)p. \label{eq.4}
	\end{eqnarray}
	Based on Eqs.(\ref{eq.2}-\ref{eq.4}), we have the following relations between $x$, $y$ and $z$.
	\begin{eqnarray}
		2z &=& x+y, \label{eq.5}\\
		y &=& x+ 2^np. \label{eq.6}
	\end{eqnarray}
	Now, $\mathcal{D}(A:B)$ can be written in terms of $x$, $y$ and $z$ as,
	\begin{eqnarray}
		\mathcal{D}(A:B) &=& 1 + (2^n -1)\frac{x}{2^n}\log_2\left(\frac{x}{2^n}\right) + \frac{y}{2^n}\log_2\left(\frac{y}{2^n}\right) - (2^{n-1}-1) \nonumber \\ &&\frac{x}{2^{n-1}}\log_2 \left(\frac{x}{2^{n-1}}\right) - \frac{z}{2^{n-1}}\log_2\left(\frac{z}{2^{n-1}}\right).\label{eq7}
	\end{eqnarray}
	We can use the property, $\log(1/u) = -\log(u)$ (for $u>0$) and $\log_22^n = n$ for further simplification of Eq.(\ref{eq7}) to get,
	\begin{eqnarray}
		\mathcal{D}(A:B) &=&1 + \left(\frac{2^n -1}{2^n}\right) x \left(\log_2 x - n\right) + \left(\frac{y}{2^n}\right)(\log_2 y -n) - \left(\frac{2^{n-1}-1}{2^{n-1}}\right)\nonumber \\ &&x(\log_2 x - (n-1))  - \left(\frac{z}{2^{n-1}}\right)(\log_2 z - (n-1)), \\
		&=& 1+\left(\frac{2^n -1 }{2^n} - \frac{2^{n-1}-1}{2^{n-1}}\right)x\log_2 x - \frac{xn(2^n - 1)}{2^n} + \left(\frac{y}{2^n}\right)\log_2 y \nonumber \\ &&- \frac{ny}{2^n} +(n-1)\left(\frac{2^{n-1}-1}{2^{n-1}}\right)x - \left(\frac{z}{2^{n-1}}\right)\log_2 z + \left(\frac{n-1}{2^{n-1}}\right)z.
	\end{eqnarray}
	The above equation can be further rearranged as follows,
	\begin{eqnarray}
		\mathcal{D}(A:B) &=& 1+ x\left[\frac{-n(2^n-1)}{2^n} + \frac{(n-1)(2^{n-1}-1)}{2^{n-1}} \right] -\frac{ny}{2^n} + \left(\frac{n-1}{2^n}\right)\nonumber \\ &&(2z) + \left( \frac{x}{2^n}\right)\log_2x + \left(\frac{y}{2^n}\right)\log_2 y - \left(\frac{z}{2^{n-1}}\right)\log_2z. \label{eq.10}
	\end{eqnarray}
	Using Eq.(\ref{eq.5}), we can substitute for $2z$ in terms of $x$ and $y$ in Eq.(\ref{eq.10}) to get,
	\begin{eqnarray}
		\mathcal{D}(A:B) &=& 1+ x\left[\frac{-n(2^n-1)}{2^n} + \frac{(n-1)(2^{n-1}-1)}{2^{n-1}} \right] -\frac{ny}{2^n} + \left(\frac{n-1}{2^n}\right)\nonumber \\ &&(x+y) + \left( \frac{x}{2^n}\right)\log_2x + \left(\frac{y}{2^n}\right)\log_2 y - \left(\frac{z}{2^{n-1}}\right)\log_2z, \\
		&=&1+ x\left[\frac{-n(2^n-1)}{2^n} + \frac{(n-1)(2^n-2)}{2^n} + \frac{(n-1)}{2^n}\right] -\frac{ny}{2^n} \nonumber \\ &&+ \frac{ny}{2^n} - \frac{y}{2^n} + \left( \frac{x}{2^n}\right)\log_2x + \left(\frac{y}{2^n}\right)\log_2 y - \left(\frac{z}{2^{n-1}}\right)\log_2z.\label{eq.12}
	\end{eqnarray}
	Again using Eq.(\ref{eq.6}), we can write $y$ in terms of $x$ in Eq.(\ref{eq.12}) and therefore, the above equation becomes,
	\begin{eqnarray}
		\mathcal{D}(A:B) &=& x\left[\frac{-n(2^n-1)}{2^n} + \frac{(n-1)(2^n-2)}{2^n} + \frac{(n-1)}{2^n}\right]- \frac{(x + 2^np)}{2^n} + 1\nonumber \\ && + \left( \frac{x}{2^n}\right)\log_2 x + \left(\frac{y}{2^n}\right)\log_2 y - \left(\frac{z}{2^{n-1}}\right)\log_2 z,\\
		&=&x\left[\frac{-n(2^n-1)}{2^n} + \frac{(n-1)(2^n-2)}{2^n} + \frac{(n-1)}{2^n}\right]- \frac{x}{2^n} - p + 1\nonumber \\ && + \left( \frac{x}{2^n}\right)\log_2 x + \left(\frac{y}{2^n}\right)\log_2 y - \left(\frac{z}{2^{n-1}}\right)\log_2 z,\\
		&=&x\left[\frac{-n(2^n-1)}{2^n} + \frac{(n-1)(2^n-2)}{2^n} + \frac{(n-1)}{2^n} - \frac{1}{2^n} + 1\right]\nonumber \\ && + \left( \frac{x}{2^n}\right)\log_2 x + \left(\frac{y}{2^n}\right)\log_2 y - \left(\frac{z}{2^{n-1}}\right)\log_2z. \label{eq.15}
	\end{eqnarray}
	It can be observed from Eq.(\ref{eq.15}) that the coefficient of $x$ is independent of $p$ and it is equal to zero for any value of $n$. Therefore, we can write the simplified expression for $\mathcal{D}(A:B)$ as follows,
	\begin{eqnarray}
		\mathcal{D}(A:B) = \frac{1}{2^n}\left\{x\log_2 x + y\log_2 y - 2z\log_2 z \right\}. \label{eq.16}
	\end{eqnarray}
	The first order derivative of $\mathcal{D}(A:B)$ in Eq.(\ref{eq.16}) with respect to $p$ is calculated to be,
	\begin{eqnarray}
		\frac{\partial \mathcal{D}(A:B)}{\partial p} &=& \frac{1}{2^n}\left\{x\times\frac{1}{x\ln2}\left(\frac{\partial x}{\partial p}\right) + \left(\frac{\partial x}{\partial p}\right)\log_2x + y\times\frac{1}{y\ln2}\left(\frac{\partial y}{\partial p} \right)+\right.\nonumber \\ &&\left. \left(\frac{\partial y}{\partial p}\right)\log_2y-2\left( z\times\frac{1}{z\ln2}\left(\frac{\partial z}{\partial p} \right)+\left(\frac{\partial z}{\partial p}\right)\log_2z\right)\right\}, \\
		&=&\frac{1}{2^n}\left\{\left(\frac{\partial x}{\partial p}\right)\left(\frac{1}{\ln2} + \log_2x\right) + \left(\frac{\partial y}{\partial p}\right)\left(\frac{1}{\ln2} + \log_2y\right) \right.\nonumber \\ &&\left.- 2 \left(\frac{\partial z}{\partial p}\right)\left(\frac{1}{\ln2} + \log_2z\right) \right\}. \label{eq.18}
	\end{eqnarray}
	The second order derivative of $\mathcal{D}(A:B)$ can be calculated using Eq.(\ref{eq.18}) as follows,
	\begin{eqnarray}
		\frac{\partial^2 \mathcal{D}(A:B)}{\partial p^2 } &=& \frac{1}{2^n}\left\{ \left(\frac{1}{\ln2}+\log_2x\right)\left(\frac{\partial^2x}{\partial p^2}\right) + \left( \frac{\partial x}{\partial p}\right)\times \frac{1}{x\ln2}\left(\frac{\partial x}{\partial p}\right) +\right. \nonumber \\ && \left. \left(\frac{1}{\ln2}+\log_2y\right)\left(\frac{\partial^2y}{\partial p^2}\right)+\left( \frac{\partial y}{\partial p}\right)\times \frac{1}{y\ln2}\left(\frac{\partial y}{\partial p}\right) -2\right. \nonumber \\ && \left. \left(\frac{1}{\ln2}+\log_2z\right)\left(\frac{\partial^2 z}{\partial p^2}\right) -2 \left( \frac{\partial z}{\partial p}\right)\times \frac{1}{z\ln2}\left(\frac{\partial z}{\partial p}\right) \right\}. \label{eq.19}
	\end{eqnarray}
	Recall Eqs.(\ref{eq.2}-\ref{eq.4}), we can compute the first and second order derivatives of $x$, $y$ and $z$ with respect to $p$ occurring  in Eq.(\ref{eq.19}) as,
	\begin{align}
		\frac{\partial x}{\partial p} = -1 \hspace{1cm}&& \text{and} && \frac{\partial^2 x}{\partial p^2} = 0, \label{eq.20}\\
		\frac{\partial y}{\partial p} = (2^n -1) \hspace{0.2cm}&& \text{and} && \frac{\partial^2 y}{\partial p^2} = 0, \\
		\frac{\partial z}{\partial p} = (2^{n-1}-1) \hspace{-0.16cm}&& \text{and} && \frac{\partial^2 z}{\partial p^2}=0. \label{eq.22}
	\end{align}
	Since all $x$, $y$ and $z$ are linear functions of $p$, it can be observed that all the second order partial derivatives of $x$, $y$ and $z$ in Eq.(\ref{eq.19}) are zero as shown in Eqs.(\ref{eq.20}-\ref{eq.22}). Considering the above facts, Eq.(\ref{eq.19}) can be rewritten as,
	\begin{eqnarray}
		\frac{\partial^2 \mathcal{D}(A:B)}{\partial p^2 } &=& \frac{1}{2^n\ln2}\left\{\frac{1}{x}\left(\frac{\partial x}{\partial p}\right)^2+\frac{1}{y}\left(\frac{\partial y}{\partial p}\right)^2 -\frac{2}{z}\left(\frac{\partial z}{\partial p}\right)^2\right\}. \label{eq.23}
	\end{eqnarray}
	Substituting the values of the first order derivatives of $x$, $y$ and $z$ from Eqs.(\ref{eq.20}-\ref{eq.22}) in Eq.(\ref{eq.23}), we get,
	\begin{eqnarray}
		\hspace{-1cm}\frac{\partial^2 \mathcal{D}(A:B)}{\partial p^2} &=& \frac{1}{2^n\ln2}\left\{\frac{(-1)^2}{(1-p)} + \frac{(2^n -1)^2}{(1+(2^n-1)p)} - \frac{2(2^{n-1}-1)^2}{(1+(2^{n-1}-1)p)}\right\}. \label{eq.24}
	\end{eqnarray}%
	The task now reduces to show that the RHS of Eq.(\ref{eq.24}) is greater than zero. To this effect, the quantity inside the flower brackets in RHS of Eq.(\ref{eq.24}) can be expanded out as shown below,
	\begin{eqnarray}
		\frac{\partial^2 \mathcal{D}(A:B)}{\partial p^2}&=& \frac{\alpha}{\gamma}, \label{eq.25}
	\end{eqnarray}
	where, $\alpha$ the numerator of Eq.(\ref{eq.25}) is given by, 
	\begin{eqnarray}
	\alpha &=& \left[(1+(2^n-1)p)(1+(2^{n-1}-1)p) + (2^n -1)^2(1-p)(1+(2^{n-1}-1)p)\right.\nonumber \\ &&\left.- 2(2^{n-1}-1)^2(1-p)(1+(2^n-1)p)\right],
	\end{eqnarray}
	and $\gamma$, the denominator of Eq.(\ref{eq.25}) is given by, 
	\begin{eqnarray}
	\gamma &=& 2^n\ln2(1-p)(1+(2^n-1)p)(1+(2^{n-1}-1)p).
	\end{eqnarray}
	The denominator $\gamma$ of Eq.(\ref{eq.25}) contains terms, $(1-p), (1+(2^n-1)p)$ and $(1+(2^{n-1}-1)p)$ which are positive in the range of $p$ considered for any $n$, therefore, the denominator is a positive quantity. Since we need to prove that $\frac{\partial^2 \mathcal{D}(A:B)}{\partial p^2}>0$, which implies, we have to show that the numerator $\alpha$ should also be greater than zero. To this end, we can rewrite the numerator of Eq.(\ref{eq.25}) as a quadratic expression in $p$ by grouping the appropriate terms as,
	\begin{align}
		\left[(2^n-1)(2^{n-1}-1) - (2^n -1)^2(2^{n-1}-1) + 2(2^{n-1}-1)^2(2^n-1)\right]p^2 +  \nonumber \\ \left[(2^n -1) + (2^{n-1}-1) - (2^n -1)^2 + (2^n-1)^2(2^{n-1}-1) + 2(2^{n-1}-1)^2 \right. \nonumber \\ \left.- 2(2^{n-1}-1)^2(2^n-1)\right] p + 1 + (2^n -1)^2 - 2(2^{n-1}-1)^2. \label{eq.26}
	\end{align}%
	Here we will first simplify the coefficient of $p^2$, then $p$ and finally the constant term individually. Therefore, considering the coefficient of $p^2$ in Eq.(\ref{eq.26}) as shown below,
	\begin{eqnarray}
		\lefteqn{\left[(2^n-1)(2^{n-1}-1) - (2^n -1)^2(2^{n-1}-1) + 2(2^{n-1}-1)^2(2^n-1)\right]} \nonumber \\
		&=& (2^n-1)(2^{n-1}-1)\left(1-(2^n-1)+2(2^{n-1}-1)\right)\nonumber \\
		&=& (2^n-1)(2^{n-1}-1)\left(1-2^n +1 +2^n -2\right)\nonumber \\
		&=& 0.
	\end{eqnarray}
	Therefore, the coefficient of $p^2$ term is zero, now we consider the coefficient of $p$ in Eq.(\ref{eq.26}) as follows,
	\begin{eqnarray}
		\lefteqn{\left[(2^n -1) + (2^{n-1}-1) - (2^n -1)^2 + (2^n-1)^2(2^{n-1}-1) + 2(2^{n-1}-1)^2 \right.} \nonumber \\ &&\left.- 2(2^{n-1}-1)^2(2^n-1)\right] \nonumber \\
		&=&\left[2^n -1 + 2^{3n-1} + 2^{n-1} - 2^{2n} - 2^{2n+1} - 2 + 2^{n+2} + 2^{n-1} -1 - 2^{3n-1} \right. \nonumber \\ &&\left.- 2^{n+1} + 2^{2n+1}+2^{2n} + 2^2 - 2^{n+2}\right] \nonumber \\
		&=& [2^n + 2.2^{n-1} - 2^{n+1}] = [2.2^n = 2^{n+1}] = [2^{n+1} - 2^{n+1}] \nonumber \\
		&=& 0.
	\end{eqnarray}
	Therefore, the coefficient of $p$ is also zero. Now we consider the constant term in Eq.(\ref{eq.26}), which is,
	\begin{eqnarray}
		\lefteqn{1 + (2^n -1)^2 - 2(2^{n-1}-1)^2} \nonumber \\
		&=& 1 + 2^{2n} +1 - 2^{n+1} - 2^{2n-1} - 2 + 2^{n+1} \nonumber \\
		&=& 2^{2n} - \frac{2^{2n}}{2} = 2^{2n-1}.
	\end{eqnarray}
	Finally, the only term that remains in the numerator of Eq.(\ref{eq.25}) is $2^{2n-1}$, which is positive for the considered range of $p$ as $n\geq2$ and where $n$ is an integer. Hence, we have proved that,
	\begin{eqnarray}
		\frac{\partial^2 \mathcal{D}(A:B)}{\partial p^2} >0.
	\end{eqnarray}%
	Therefore, $\mathcal{D}(A:B)$ is a convex function in the considered range of $p$. The above proof is completely analytical, however, to reinforce our claim and get deeper insights, we have done numerical analysis of the convexity property of $\mathcal{D}(A:B)$ for any $n$ and $p$ in the considered range. To this end, the second order derivative $\frac{\partial^2 \mathcal{D}(A:B)}{\partial p^2}$ was calculated numerically for different values of $n$ in the considered range of $p$, and the plots in Fig.[\ref{convex}] show conclusively that $\mathcal{D}(A:B)$ is indeed a convex function.

	\begin{figure*}[h]
		\centering
		\includegraphics[width = 0.95\textwidth]{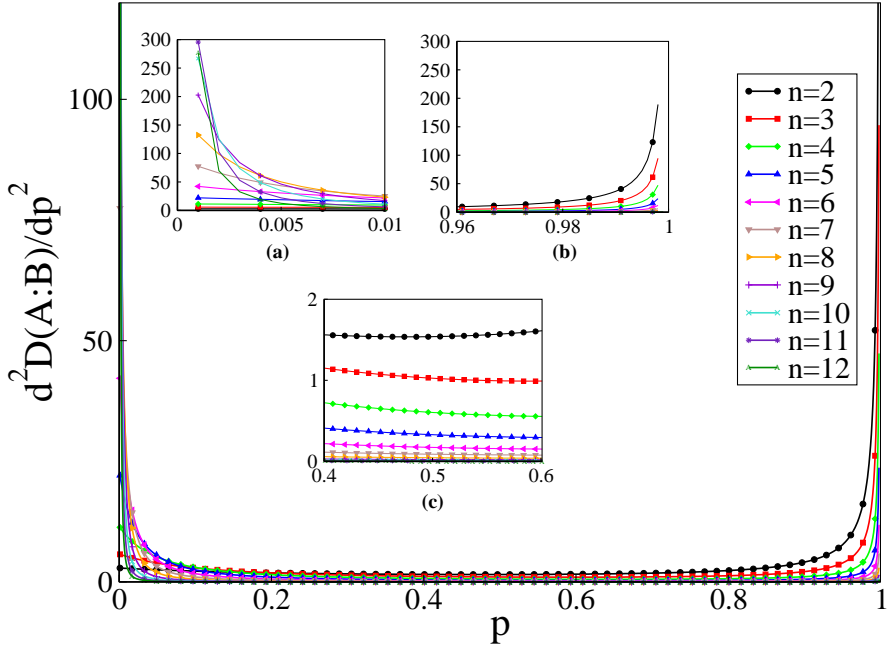}
		\caption{Numerical plot for the variation of the second order derivative of $\mathcal{D}(A:B)$ with respect to $p$ for different values of $n$. The insets show more closely the convexity of $\mathcal{D}(A:B)$ in the split ranges of $p$ as shown.
		} \label{convex}
	\end{figure*}
	\section{Concavity of Logarithmic Negativity}
	\renewcommand{\theequation}{B.\arabic{equation}}
	\setcounter{equation}{0}
Now we will prove that the logarithmic negativity, as given below in Eq.(\ref{eq.31}) is a concave function. Again by using differential calculus in the range of $p$ such that $\frac{1}{1+2^{n-1}}<p<1$, if $\frac{\partial^2 \mathcal{N}_L}{\partial p^2} <0$ for all values of $n$, then $\mathcal{N}_L$ is a concave function.
Here we have logarithmic negativity $\mathcal{N}_L$ dependent upon $p$ in the range $\frac{1}{1+2^{n-1}}<p<1$. Considering the expression for logarithmic negativity in the above range of $p$ we have,
\begin{eqnarray}
	\mathcal{N}_L = \log_2\left(\frac{(2^{n-1}+1)p + (2^{n-1}-1)}{2^{n-1}}\right). \label{eq.31}
\end{eqnarray}
We have to calculate the second order derivative of $\mathcal{N}_L$ with respect to $p$. To this end, we already know the first order derivative of $\mathcal{N}_L$ with respect to $p$ from Eq.(\ref{varywith_p}) as, 
\begin{eqnarray}
	\frac{\partial \mathcal{N}_L}{\partial p} &=&\frac{1}{\ln2\left(p+\left(\frac{2^{n-1}-1}{2^{n-1}+1}\right)\right)}. \label{eq.33}
\end{eqnarray}
We can now differentiate Eq.(\ref{eq.33}) with respect to $p$, to obtain second order derivative of $\mathcal{N}_L$ with respect to $p$ as,
\begin{eqnarray}
	\frac{\partial^2 \mathcal{N}_L}{\partial p^2} &=& \frac{1}{\ln2}\left(\frac{-1}{\left(p + \left(\frac{2^{n-1}-1}{2^{n-1}+1}\right)\right)^2}\right). \label{eq.34}
\end{eqnarray}
Since the denominator of the RHS in Eq.(\ref{eq.34}) is always positive for the considered range of $p$, it is transparent to see that the RHS of Eq.(\ref{eq.34}) is always negative. Therefore, it can be concluded that,
\begin{eqnarray}
	\frac{\partial^2 \mathcal{N}_L}{\partial p^2} <0, \label{eq.35}
\end{eqnarray}
\begin{figure*}[h]
	\centering
	\includegraphics[width = 0.95\textwidth]{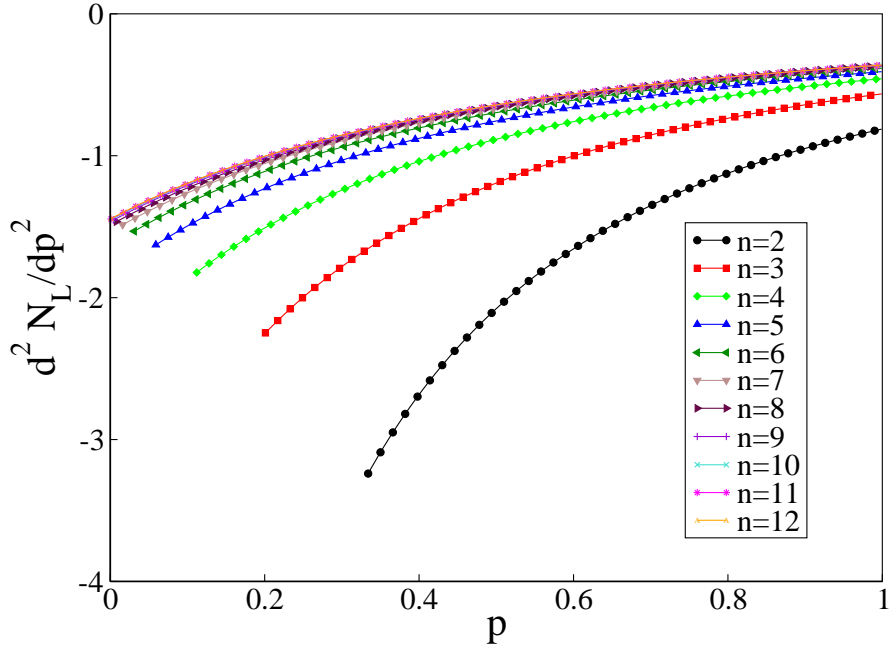}
	\caption{Numerical plot for the variation of the second order derivative of logarithmic negativity ($\mathcal{N}_L$) with respect to $p$ such that $\frac{1}{1+2^{n-1}}<p<1$, for different values of $n$, showing clearly the concavity of $\mathcal{N}_L$.}\label{concave}
\end{figure*}%
for all values of $p$ in the range $\frac{1}{1+2^{n-1}}<p<1$ and for any value of $n$. Thus the logarithmic negativity is proved to be a concave function. The above result is analytical, we also calculated numerically the second order derivative of Eq.(\ref{eq.31}) with respect to $p$, to verify our analytical results. We have plotted the numerical results in Fig.[\ref{concave}] above, from which it can be seen that the second order derivative of logarithmic negativity with respect to $p$ is always negative for all values of $p$ such that $\frac{1}{1+2^{n-1}}<p<1$ and for any value of $n$. Therefore, the analytical result is backed by the numerical results.

\end{appendices}
%

%
\end{document}